\newcommand{\MSS}{\textsc{Multiple Subset Sum}}
\newcommand{\kSS}{$k$\textsc{-Subset Sum}}
\newcommand{\SSR}{\textsc{Subset Sum Ratio}}
\newcommand{\kSSR}{$k$\textsc{-Subset Sum Ratio}}
\newcommand{\MK}{\textsc{Multiple Knapsack}}
\newcommand{\SubS}{\textsc{Subset Sum}}
\newcommand{\ESS}{\textsc{Equal Subset Sum}}
\newcommand{\ColorCoding}{{\normalfont\texttt{ColorCoding}}}
\newcommand{\ColorCodingLayer}{{\normalfont\texttt{ColorCodingLayer}}}
\newcommand{\Mod}[1]{\ (\mathrm{mod}\ #1)}
\algrenewcommand\algorithmicrequire{\textbf{Input :}}
\algrenewcommand\algorithmicensure{\textbf{Output :}}
\algnewcommand{\lIf}[2]{
  \State \algorithmicif\ #1\ \algorithmicthen\ #2}
\algnewcommand{\lElsIf}[2]{
  \State \algorithmicelse\ \algorithmicif\ #1\ \algorithmicthen\ #2}
\algnewcommand{\lElse}[1]{
  \State \algorithmicelse\ #1}
\algnewcommand{\Ret}[1]{
  \State \algorithmicreturn\ #1}
\algnewcommand{\IfThenElse}[3]{
  \State \algorithmicif\ #1\ \algorithmicthen\ #2\ \algorithmicelse\ #3}
\title{Faster Algorithms for $k$\textsc{-Subset Sum} and Variations} 
\titlerunning{Faster Algorithms for $k$\textsc{-Subset Sum} and Variations} 
\author{Antonis Antonopoulos}{National Technical University of Athens, Greece }{aanton@corelab.ntua.gr}{https://orcid.org/0000-0002-1368-6334}{}
\author{Aris Pagourtzis}{National Technical University of Athens, Greece }{pagour@cs.ntua.gr}{https://orcid.org/0000-0002-6220-3722}{Research supported in part by the PEVE 2020 basic research support programme of the National Technical University of Athens.}
\author{Stavros Petsalakis}{National Technical University of Athens, Greece }{spetsalakis@corelab.ntua.gr}{https://orcid.org/0000-0001-7825-2839}{Research supported in part by the PEVE 2020 basic research support programme of the National Technical University of Athens.}
\author{Manolis Vasilakis}{National Technical University of Athens, Greece }{mvasilakis@corelab.ntua.gr}{https://orcid.org/0000-0001-6505-2977}{}
\authorrunning{A. Antonopoulos, A. Pagourtzis, S. Petsalakis and M. Vasilakis} 
\keywords{\textsc{Subset Sum}, FFT, Color Coding, \textsc{Multiple Subset Sum}, \textsc{Multiple Knapsack}, $k$\textsc{-Subset Sum}, Pseudopolynomial Algorithms} 
\begin{document}

\maketitle

\begin{abstract}
    We present new, faster pseudopolynomial time algorithms for the $k$\textsc{-Subset Sum} problem, defined as follows: given a set $Z$ of $n$ positive integers and $k$ targets $t_1, \ldots, t_k$, determine whether there exist $k$ disjoint subsets $Z_1,\dots,Z_k \subseteq Z$, such that $\Sigma(Z_i) = t_i$, for $i = 1, \ldots, k$.
    Assuming $t = \max \{ t_1, \ldots, t_k \}$ is the maximum among the given targets, a standard dynamic programming approach based on Bellman's algorithm~\cite{Bellman} can solve the problem in $O(n t^k)$ time.
    We build upon recent advances on \textsc{Subset Sum} due to Koiliaris and Xu~\cite{KoiliarisX19} and Bringmann~\cite{Bringmann17} in order to provide faster algorithms for $k$\textsc{-Subset Sum}.
    We devise two algorithms: a deterministic one  of time complexity $\tilde{O}(n^{k / (k+1)} t^k)$ and a randomised one of $\tilde{O}(n + t^k)$ complexity.
    Additionally, we show how these algorithms can be modified in order to incorporate cardinality constraints enforced on the solution subsets.
    We further demonstrate how these algorithms can be used in order to cope with variations of $k$\textsc{-Subset Sum}, namely \textsc{Subset Sum Ratio}, $k$\textsc{-Subset Sum Ratio} and \textsc{Multiple Subset Sum}.
\end{abstract}

\section{Introduction}
\label{sec:introduction}

One of Karp's~\cite{Karp72} $21$ NP-complete problems, \SubS\ has seen astounding 
progress with respect to its pseudopolynomial time solvability over the last few years.
Koiliaris and Xu~\cite{KoiliarisX19} and Bringmann~\cite{Bringmann17} have presented pseudopolynomial algorithms resulting in substantial improvements over the long-standing standard approach of Bellman~\cite{Bellman}, and the improvement by Pisinger~\cite{Pisinger99}.

\ESS\ is a closely related algorithmic problem with applications in computational biology~\cite{CieliebakEP03,CieliebakE04}, computational social choice~\cite{LiptonMMS04}, and cryptography~\cite{vol}, to name a few.
Additionally, it is related to important theoretical concepts such as the complexity of search problems in the class \textsf{TFNP}~\cite{Papadimitriou94}.
This problem can be solved in $\tilde{O}(n+t)$ time\footnote{$\tilde{O}$ notation ignores polylogarithmic factors.} (where $t$ is a bound for the sums we are interested in) by a simple extension of Bellman's~\cite{Bellman} algorithm, as it only asks for two disjoint subsets of equal sum $s$, for some $s \leq t$, and as such can exploit the following \emph{disjointness property}:
if there exists a pair of sets with equal sum $s \leq t$, then by removing their common elements, one can produce two disjoint sets of equal sum $s' < s \leq t$.
Therefore, the algorithm terminates as soon as a sum is produced for a second time and by using this property as well as an appropriate tree data structure, we can bound the total running time by the number of all the possible distinct subset sums up to $t$, hence obtaining an $\tilde{O} (n + t)$ algorithm.

A generalisation of \SubS\ is the problem that asks for $k$ disjoint subsets the sums of which are respectively equal to targets $t_i,i=1\dots k$, henceforth referred to as the \kSS\ problem. 
One can see that even in the case of $k=2$ and $t_1=t_2$, the problem seems to escalate in complexity, since the aforementioned disjointness property does not hold.
Particular interest can be found in the special case when the sum of targets equals the sum of the input elements, that is, we ask for a partition of the input set to subsets of particular sums.
Another special case, which finds applications in fair allocation problems, is when we want to partition the input set to a number of subsets of equal sum, i.e.\ all $t_i$ are equal to the sum of the input elements divided by $k$.

\subsection{Related work}

\ESS\ as well as its optimisation version called \SSR~\cite{BazganST02} are closely related to problems appearing in many scientific areas.
Some examples include the Partial Digest problem, which comes from computational biology~\cite{CieliebakEP03,CieliebakE04}, the
allocation of individual goods~\cite{LiptonMMS04}, tournament construction~\cite{khanphd}, and a variation of \SubS, called Multiple Integrated Sets SSP, which finds applications in the field of cryptography~\cite{vol}.
Furthermore, it is related to important concepts in theoretical computer science; for example, a restricted version of \ESS\ lies in a subclass of  the complexity class $\mathsf{TFNP}$, namely in $\mathsf{PPP}$~\cite{Papadimitriou94}, a class consisting of search problems that always have a solution due to some pigeonhole argument, and no polynomial time algorithm is known for this restricted version.

\ESS\ has been proven NP-hard by Woeginger and Yu~\cite{WoegingerY92} and several variations have been proven NP-hard by Cieliebak \textsl{et al.} in~\cite{CieliebakEP03b,CieliebakEPS08}.
A 1.324-approximation algorithm has been proposed for \textsc{Subset Sum Ratio} in~\cite{WoegingerY92} 
and several FPTASs appeared in~\cite{BazganST02,Nanongkai13,MelissinosP18}, the fastest so far being the one in~\cite{MelissinosP18} of complexity $O(n^4/\varepsilon)$. 

As far as exact algorithms are concerned, recent progress has shown that \ESS\ can be solved probabilistically in $O^{*}(1.7088^n)$ time~\cite{MuchaNPW19}, which is
faster than a standard ``meet-in-the-middle'' approach yielding an $O^{*}(3^{n/2}) \le O^{*}(1.7321^n)$ time algorithm.
Additionally, as explained earlier, an exact solution can be obtained in pseudopolynomial $\tilde{O}(n+t)$ time using an extension of Bellman's~\cite{Bellman} algorithm, when an upper bound $t$ on the sums is provided in the input.
However, these techniques do not seem to apply to \kSS, mainly due to the fact that we cannot assume the minimality of the involved subsets.
To the best of our knowledge, no pseudopolynomial time algorithm substantially faster than the standard $O(n t^k)$ dynamic programming approach was known for \kSS\ prior to this work. 

These problems are tightly connected to \SubS, which has seen impressive advances recently, due to Koiliaris and Xu~\cite{KoiliarisX19} who gave a deterministic $\tilde{O}(\sqrt{n}t)$ algorithm, where $n$ is the number of input elements and $t$ is the target, and by Bringmann~\cite{Bringmann17} who gave a $\tilde{O}(n + t)$ randomised algorithm.
Jin and Wu proposed a simpler randomised algorithm~\cite{JinWu} achieving the same bounds as~\cite{Bringmann17}, which however seems to only solve the decision version of the problem.
Recently, Bringmann and Nakos~\cite{BringmannN20} have presented an 
$O(\rvert\mathcal{S}_t(Z) \rvert^{4/3}poly(\log t))$ algorithm, where $\mathcal{S}_t(Z)$ is the set of all subset sums of the input set $Z$ that are smaller than $t$, based on top-$k$ convolution.

\MSS\ is a special case of \MK, both of which have attracted considerable attention.
Regarding \MSS, Caprara \textsl{et al.} present a PTAS for the case where all targets are the same~\cite{CapraraKP00}, and subsequently in~\cite{CapraraKP03} they introduce a $3/4$ approximation algorithm. 
The \MK\ problem has been more intensively studied in recent years as applications for it arise naturally (in fields such as transportation, industry, and finance, to name a few).
Some notable studies on variations of the problem are given by Lahyani \textsl{et al.}~\cite{LahyaniCKC19} and Dell'Amico \textsl{et al.}~\cite{DellAmicoDIM19}. Special cases and variants of \MSS, such as the \kSS\ problem, have been studied in~\cite{CieliebakEPS08,CieliebakEP03b} where simple pseudopolynomial algorithms were proposed.

\subsection{Our contribution}

We first present two algorithms for \kSS: a deterministic one of complexity $\tilde{O}(n^{k/(k+1)} t^k)$ and a randomised one of complexity $\tilde{O}(n + t^k)$.
Then, we describe the necessary modifications required in order to solve \kSS\ in the case where there exist some cardinality constraints regarding the solution subsets.
We subsequently show how these ideas can be extended to solve the decision versions of \SSR, \kSSR\ and \MSS.

Our algorithms extend and build upon the algorithms and techniques proposed by Koiliaris and Xu~\cite{KoiliarisX19} and Bringmann~\cite{Bringmann17} for \SubS.
In particular, we make use of FFT computations, modular arithmetic and color-coding, among others. 

We start by presenting some necessary background in Section~\ref{sec:prelim}.
Then, we present the two \kSS\ algorithms in Section~\ref{sec:kss}, followed by the two algorithms for the cardinality constrained version of the problem in Section~\ref{sec:kss_cardinality}.
We next show how these algorithms can be used to efficiently decide multiple related subset problems.
Finally, we conclude the paper by presenting some directions for future work.

\section{Preliminaries}
\label{sec:prelim}

\subsection{Notation}

We largely follow the notation used in~\cite{Bringmann17} and~\cite{KoiliarisX19}.

\begin{itemize}
    
    \item Let $[x] = \{ 0, \ldots, x \}$ denote the set of integers in the interval $[0, x]$.
    
    \item Given a set $Z \subseteq \mathbb{N}$, we denote:
    \begin{itemize}
        \item the sum of its elements by $\Sigma (Z) = \sum_{z \in Z} z$.

        \item the \emph{characteristic polynomial of Z} by $f_Z (x)= \sum_{z \in Z} x^z$.
        
        \item the \emph{k-modified characteristic polynomial of Z} by $f_Z^k (\vec{x}) = \sum_{z \in Z} \sum_{i=1}^k x_i^z$, where $\vec{x} = (x_1, \ldots, x_k)$.

        \item the \emph{set of all subset sums of Z up to t} by $\mathcal{S}_t (Z) = \{ \Sigma (X) \mid X \subseteq Z \} \cap [t]$.

    \end{itemize}
    
    \item For two sets $X,Y \subseteq \mathbb{N}$, let 
    \begin{itemize}
        \item $X \oplus Y = \{x + y \mid x \in X \cup \{0\}, y \in Y \cup \{0\} \}$ denote the \emph{sumset} or \emph{pairwise sum} of sets $X$ and $Y$.
        
        \item $X \oplus_t Y = (X \oplus Y) \cap [t]$ denote the \emph{t-capped sumset} or \emph{t-capped pairwise sum} of sets $X$ and $Y$. Note that $t > 0$.
    \end{itemize}
    
    \item The pairwise sum operations can be extended to sets of multiple dimensions.
    Formally, let $X,Y \subseteq \mathbb{N}^k$.
    Then, $X \oplus Y = \{ (x_1 + y_1, \ldots, x_k + y_k) \}$, where $(x_1, \ldots, x_k) \in X \cup \{0\}^k$ and $(y_1, \ldots, y_k) \in Y \cup \{0\}^k $.

\end{itemize}

\subsection{Using FFT for \SubS}
Given two sets $A, B \subseteq \mathbb{N}$ and an upper bound $t>0$, one can compute the t-capped pairwise sum set $A \oplus_t B$ using FFT to get the monomials of the product $f_A \cdot f_B$ that have exponent $\leq t$.
This operation can be completed in time $O(t \log t)$.
Observe that the coefficients of each monomial $x^i$ represent the number of pairs $(a, b)$ that sum up to $i$, where $a \in A \cup \{ 0 \}$ and $b \in B \cup \{ 0 \}$.

Also note that an FFT operation can be extended to polynomials of multiple variables.
Thus, assuming an upper bound $t$ for the exponents involved, it is possible to compute $f_A^k \cdot f_B^k$ in $O(t^k \log t)$ time.

\begin{lemma}
Given two sets of points $S , T \subseteq [t]^k$ one can use multidimensional FFT to compute the set of pairwise sums (that are smaller than $t$) in time $O(t^k \log t)$.
\end{lemma}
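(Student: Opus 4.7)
The natural approach is to lift the one-dimensional polynomial-multiplication argument used for $A \oplus_t B$ to $k$ variables. First I would associate to each set of points $X \subseteq [t]^k$ the multivariate polynomial
\[
    g_X(\vec{x}) \;=\; \sum_{\vec{p} \in X} \prod_{i=1}^{k} x_i^{p_i},
\]
so that the coefficient of $\prod_{i=1}^{k} x_i^{c_i}$ in the product $g_S(\vec{x}) \cdot g_T(\vec{x})$ counts exactly the pairs $(\vec{s},\vec{t}) \in S \times T$ with $\vec{s}+\vec{t}=\vec{c}$. Consequently the support of $g_S \cdot g_T$ coincides with the sumset $S \oplus T$, and retaining only those monomials whose exponents all lie in $[t]$ yields the $t$-capped pairwise sums required by the lemma.

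Next I would compute the product $g_S \cdot g_T$ by a multidimensional FFT: encode each polynomial as a $k$-dimensional coefficient array of side length at most $2t+1$, perform a one-dimensional FFT along each of the $k$ axes in succession (the standard row-column decomposition), multiply pointwise in the frequency domain, and invert. A single axis-wise pass consists of $\Theta(t^{k-1})$ one-dimensional FFTs of length $\Theta(t)$, contributing $O(t^k \log t)$; over all $k$ passes together with the pointwise product and the inverse transform, the total cost is $O(k \cdot t^k \log t)$, which matches the claimed $O(t^k \log t)$ bound under the convention---already implicit in the earlier remark on $f_A^k \cdot f_B^k$---that $k$ is treated as a constant.

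Once the polynomial encoding is in place the argument is essentially bookkeeping, and the only place where care is needed is the multidimensional FFT cost estimate itself; reading off the support of the product polynomial and thresholding each coordinate by $t$ contributes only linear overhead in the output size. I do not foresee any genuine obstacle beyond pinning down that the row-column FFT, rather than a naive reinterpretation as a single one-dimensional transform over a domain of size $t^k$, is what keeps the leading logarithmic factor of the form $\log t$ rather than $\log(t^k)$.
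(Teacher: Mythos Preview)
Your proposal is correct and follows the same approach as the paper: the paper's own justification is simply a one-line appeal to \cite[Ch.~30]{CLRS} that ``one can pipeline $k$ one-dimensional FFT operations in order to compute a multi-dimensional FFT in time $O(t^k \log t)$,'' which is exactly the row--column decomposition you spell out. Your write-up is in fact more detailed than what the paper provides; the only superfluous remark is the worry about $\log t$ versus $\log(t^k)$, since $\log(t^k)=k\log t=O(\log t)$ for constant $k$, so the distinction is immaterial here.
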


As shown in~\cite[Ch.~30]{CLRS}, given two set of points $S , T \subseteq [t]^k$ one can pipeline $k$ one-dimensional FFT operations in order to compute a multi-dimensional FFT in time $O(t^k \log t)$.

\section{\kSS}
\label{sec:kss}

In this section we propose algorithms that build on the techniques of Koiliaris and Xu~\cite{KoiliarisX19} and Bringmann~\cite{Bringmann17} in order to solve \kSS:  given a set $Z$ of $n$ positive integers and $k$ targets $t_1, \ldots, t_k$, determine whether there exist $k$ disjoint subsets $Z_1, \dots, Z_k \subseteq Z$, such that $\Sigma(Z_i) = t_i$, for $i = 1, \ldots, k$.
Note that $k$ is fixed and not part of the input.
For the rest of this section, assume that $Z = \{ z_1, \ldots, z_n \}$ is the input set, $t_1, \ldots, t_k$ are the given targets and $t = \max \{ t_1, \ldots, t_k \}$.
 
The main challenge in our approach is the fact that the existence of subsets summing up to the target numbers (or any other pair of numbers) does not imply the disjointness of said subsets.
Hence, it is essential to guarantee the disjointness of the resulting subsets through the algorithm.

Note that one can extend Bellman's classic dynamic programming algorithm for \SubS~\cite{Bellman} to solve this problem in $O(n t^k)$ time, as seen in algorithm~\ref{alg:bellman}, where we solve the special case $k = 2$.
Similarly, the algorithm can be extended to apply for any arbitrary $k$.

\begin{algorithm}[htb]
    \caption{$\texttt{Bellman}{(Z,t_1,t_2)}$}
    \label{alg:bellman}
    \begin{algorithmic}[1]
        \Require A set $Z = \{ z_1, \ldots, z_n \}$ of positive integers and targets $t_1, t_2 \leq \Sigma (Z)$.
    	\Ensure True if there are two disjoint subsets $Z_1, Z_2 \subseteq Z$, with $\Sigma (Z_1) = t_1$ and $\Sigma (Z_2) = t_2$, else false.

        \State $t \gets \max \{ t_1, t_2 \}$
        \State initialise table $T[t][t][n] \gets false$ everywhere
        \State $T[0][0][0] \gets true$ \Comment{Initially, only $(\emptyset, \emptyset)$ is possible.}
        \For{$n_i = 1, \ldots, n$}
            \For{$(i, j) \in [t] \times [t]$}
                \If{$T[i][j][n_i - 1] = true$}
                    \State $T[i][j][n_i] = true$ \Comment{$z_{n_i} \notin Z_1 \cup Z_2$}
                    \State $T[i + z_i][j][n_i] = true$ \Comment{$z_{n_i} \in Z_1$}
                    \State $T[i][j + z_i][n_i] = true$ \Comment{$z_{n_i} \in Z_2$}
                \EndIf
            \EndFor
        \EndFor
        \Ret{$T[t_1][t_2][n]$}
    \end{algorithmic}
\end{algorithm}
\subsection{Solving \kSS\ in deterministic $\tilde{O}(n^{k/(k+1)} t^k)$ time}
\label{subsec:koiliaris}

In this subsection we show how to decide \kSS\ in $\tilde{O}(n^{k/(k+1)} t^k)$ time, where $t = \max \{t_1,\dots,t_k \}$.
To this end, we extend the algorithm proposed by Koiliaris and Xu~\cite{KoiliarisX19}.
We first describe briefly the original algorithm for completeness and clarity.

The algorithm recursively partitions the input set $S$ into two subsets of equal size and returns all pairwise sums of those sets, along with cardinality information for each sum. This is achieved using FFT for pairwise sums as discussed in section \ref{sec:prelim}.

Using properties of congruence classes, it is possible to further capitalise on the efficiency of FFT for subset sums as follows. 
Given bound $t$, partition the elements of the initial set into congruence classes mod $b$, for some integer $b > 0$.
Subsequently, divide the elements by $b$ and keep their quotient, hence reducing the size of the maximum exponent value from $t$ to $t/b$.
One can compute the $t$-capped sum of the initial elements by computing the $t/b$-capped sum of each congruence class and combining the results of each such operation in a final $t$-capped FFT operation, taking into account the remainder of each congruence class w.r.t. the number of elements (cardinality) of each subset involved.
In order to achieve this, it is necessary to keep cardinality information for each sum (or monomial) involved, which can be done by adding another variable with minimal expense in complexity.

Thus, the overall complexity of FFT operations for each congruence class $l \in \{0, 1, \ldots, b-1\}$ is $O ( (t/b) n_l \log n_l \log t )$, where $n_l$ denotes the number of the elements belonging to said congruence class. 
Combining these classes takes $O(b t \log t)$ time so the final complexity is  $O(t \log t (\frac{n \log n}{b} + b))$.
Setting $b=\lfloor \sqrt{n \log n} \rfloor$ gives $O(\sqrt{n \log n}\,t \log t) = \tilde{O}(\sqrt{n}\,t)$.
After combining the sums in the final step, we end up with a polynomial that contains each realisable sum represented by a monomial, the coefficient of which represents the number of different (not necessarily disjoint) subsets that sum up to the corresponding sum.

Our algorithm begins by using the modified characteristic polynomials we proposed in the preliminaries, thereby representing each $z \in Z$ as $\Sigma_{i=1}^k x_i^z$ in the base polynomial at the leaves of the recursion.
We also use additional dimensions for the cardinalities of the subsets involved; each cardinality is represented by the exponent of some $x_i$, with index $i$ greater than $k$.
We then proceed with using multi-variate FFT in each step of the recursion in an analogous manner as in the original algorithm, thereby producing polynomials with terms $x_1^{t'_1} \ldots x_{k}^{t'_k} x_{k+1}^{c_1} \ldots x_{2k}^{c_k}$, each of which corresponds to a $2k$-tuple of disjoint subsets of sums ${t'_1}, \ldots, {t'_k}$ and cardinalities ${c_1}, \ldots, {c_k}$ respectively.

This results in FFT operations on tuples of $2k$ dimensions, having $k$ dimensions of max value $t/b$, and another $k$ dimensions of max value $n_l$ which represent the cardinalities of the involved subsets, requiring $O(n_l^k(t/b)^k \log (n_l) \log (n_l t/b))$ time for congruence class $l \in \{0, 1, \ldots, b-1\}$.

The above procedure is implemented in Algorithm~\ref{alg:koil1}, while the main algorithm is Algorithm~\ref{alg:koil2}, which combines the results from each congruence class, taking additional $O(b t^k\log t)$ time.

\begin{lemma}
\label{lem:disjointness}
There is a term $x_1^{t'_1} \dots x_k^{t'_k}$ in the polynomial returned by Algorithm~\ref{alg:koil2} if and only if there exist $k$ disjoint subsets $Z_1, \ldots, Z_k \subseteq Z$ such that $\Sigma(Z_i) = t'_i$.
\end{lemma}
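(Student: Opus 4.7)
The plan is to give a bijective/combinatorial reading of the polynomial produced by the algorithm and read off disjointness directly from the structure of its monomials. The central observation is that for every element $z \in Z$, its contribution to the leaf polynomial is of the form $x_1^{z}+x_2^{z}+\dots+x_k^{z}$, and the pairwise-sum FFT implicitly adds a ``zero'' option (since $X\oplus Y$ takes $x\in X\cup\{0\}$). Thus each leaf effectively offers the choice ``leave $z$ out'' or ``place $z$ in exactly one of the subsets $Z_1,\dots,Z_k$''. When the FFT recursion multiplies leaf polynomials, a monomial in the product corresponds to one such choice per element, made independently. This forces the selected elements to be partitioned among $Z_1,\dots,Z_k$ — no element can be ``used twice'' because each leaf contributes to at most one variable $x_i$ in any given term.

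With that setup, I would treat the two directions symmetrically. For the ``if'' direction, given disjoint $Z_1,\dots,Z_k\subseteq Z$ with $\Sigma(Z_i)=t'_i$, I construct an explicit choice that produces the monomial $x_1^{t'_1}\cdots x_k^{t'_k}$: for each $z\in Z_j$ pick the term $x_j^{z}$ from the corresponding leaf, and for each $z\notin\bigcup_i Z_i$ pick the implicit $1$; summing the exponents of $x_j$ yields exactly $\Sigma(Z_j)=t'_j$. This choice contributes $+1$ to the coefficient, and since all intermediate operations have nonnegative integer coefficients no cancellation occurs. For the ``only if'' direction, I consider any monomial in the returned polynomial and unfold the recursion: every such monomial is a sum over assignments $\sigma\colon Z\to\{0,1,\dots,k\}$ (where $\sigma(z)=0$ means ``unused'' and $\sigma(z)=j$ means ``place $z$ in $Z_j$''); the exponent of $x_j$ equals $\sum_{z:\sigma(z)=j}z$. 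A nonzero coefficient of $x_1^{t'_1}\cdots x_k^{t'_k}$ therefore witnesses at least one assignment $\sigma$, whose preimages $Z_j=\sigma^{-1}(j)$ are pairwise disjoint by definition and satisfy $\Sigma(Z_j)=t'_j$.

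The main obstacle, and the step I would spend most care on, is verifying that the congruence-class optimisation of Algorithm~\ref{alg:koil1} together with the recombination step of Algorithm~\ref{alg:koil2} preserves this combinatorial reading. Inside a single congruence class $l$ (elements congruent to $r$ mod $b$), each element $z=qb+r$ is replaced by its quotient $q$, so the exponent of $x_i$ recorded during that FFT phase represents only the ``quotient part'' of $\Sigma(Z_j\cap\text{class }l)$. The auxiliary dimensions $x_{k+1},\dots,x_{2k}$ track the cardinalities $c_1,\dots,c_k$ of the contribution of class $l$ to each $Z_j$, and the recombination step in Algorithm~\ref{alg:koil2} rescales by $b$ and adds $r\cdot c_j$ in the $x_j$-exponent before performing a final $k$-variate FFT across classes. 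One checks by induction on the recursion tree that (i) within a class the monomials correspond to assignments of that class's elements to subsets with correctly recorded cardinalities, and (ii) combining across classes preserves disjointness because different congruence classes are disjoint as subsets of $Z$. Putting (i) and (ii) together restores the exponent $\sum_{z:\sigma(z)=j}z=b\cdot(\text{quotient sum})+r\cdot c_j$ summed over all $r$, which recovers exactly the invariant established in the first two paragraphs, and so Lemma~\ref{lem:disjointness} follows.
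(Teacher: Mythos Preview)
Your proposal is correct and follows essentially the same idea as the paper's proof: disjointness is guaranteed because every monomial in the output arises by combining leaf contributions from distinct elements along the recursion tree, so no element can be placed in two subsets simultaneously. Your treatment is in fact more complete than the paper's very brief argument---you handle both directions of the biconditional explicitly and verify that the congruence-class quotient/recombination step preserves the combinatorial reading, whereas the paper's proof is a one-sentence sketch that addresses only the ``only if'' direction.
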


\begin{proof}
We observe that each of the terms of the form $x_1^{t'_1}\dots x_k^{t'_k}$ has been produced at some point of the recursion via an FFT operation combining two terms that belong to different subtrees, ergo containing different elements in each subset involved.
As such, $t'_1,\dots ,t'_k$ are sums of disjoint subsets of $Z$.
\end{proof}

\begin{theorem}
Given a set $Z = \{ z_1, \ldots, z_n \}\subseteq \mathbb{N}$ of size $n$ and targets $t_1,\dots ,t_k$, Algorithm~\ref{alg:koil2} can decide \kSS\ in time $\tilde{O}(n^{k/(k+1)} t^k)$, where $t = \max \{t_1,\dots,t_k \}$.
\end{theorem}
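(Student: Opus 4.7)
The plan is that correctness is essentially free from the preceding material: by Lemma \ref{lem:disjointness}, a monomial $x_1^{t'_1}\cdots x_k^{t'_k}$ appears in the polynomial returned by Algorithm~\ref{alg:koil2} exactly when there exist $k$ pairwise disjoint subsets $Z_1,\ldots,Z_k \subseteq Z$ with $\Sigma(Z_i)=t'_i$. So the decision procedure is simply to test whether the monomial $x_1^{t_1}\cdots x_k^{t_k}$ appears, and the substance of the theorem lies entirely in the running-time analysis.

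For the running time, I would collect the two contributions already identified in the discussion preceding the statement. First, on congruence class $l\in\{0,\ldots,b-1\}$ containing $n_l$ elements, the modified recursion of Algorithm~\ref{alg:koil1} performs multidimensional FFTs over $2k$-tuples with $k$ sum-coordinates bounded by $t/b$ and $k$ cardinality-coordinates bounded by $n_l$, costing $O\!\bigl(n_l^k (t/b)^k \log n_l \log(n_l t/b)\bigr)$. Second, combining the $b$ resulting polynomials back together through a $t$-capped multidimensional FFT on the sum-coordinates costs $O(b\, t^k \log t)$. Summing over $l$ and hiding polylogarithmic factors gives a total of $\tilde{O}\!\bigl(\sum_{l=0}^{b-1} n_l^k \cdot (t/b)^k + b\, t^k\bigr)$.

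The key inequality is to bound $\sum_l n_l^k$ subject to $\sum_l n_l = n$. Since $n_l \leq n$ for every class, we get $\sum_l n_l^k \leq n^{k-1}\sum_l n_l = n^k$, so the overall bound simplifies to $\tilde{O}(n^k t^k / b^k + b\, t^k)$. Balancing the two terms (by setting them equal, or by a single-variable differentiation in $b$) yields $b^{k+1} = \Theta(n^k)$, i.e. $b = \Theta(n^{k/(k+1)})$; with this choice each term becomes $\tilde{O}(n^{k/(k+1)} t^k)$, which is the claimed bound. As a sanity check, $k=1$ recovers the $\tilde{O}(\sqrt{n}\,t)$ complexity of Koiliaris and Xu~\cite{KoiliarisX19}, matching the base case on which the algorithm is built.

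I expect the only genuinely delicate point to be the bound $\sum_l n_l^k \leq n^k$: unlike in the one-dimensional Koiliaris–Xu analysis, where $\sum_l n_l = n$ is immediately linear, here a single congruence class may a priori absorb all of $Z$ and force the inequality to be tight, so there is no room to save a factor by averaging. Everything else in the argument is routine: the per-class FFT cost follows the Koiliaris–Xu recursion with $2k$ variables in place of one, and the combining step is a single multidimensional $t$-capped FFT whose complexity is given by the lemma at the end of Section~\ref{sec:prelim}.
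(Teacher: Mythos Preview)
Your proposal is correct and follows essentially the same approach as the paper: correctness via Lemma~\ref{lem:disjointness}, then the two-term cost $\tilde{O}(n^k t^k/b^k + b\,t^k)$ balanced at $b=\Theta(n^{k/(k+1)})$. In fact you spell out the step $\sum_l n_l^k \le n^{k-1}\sum_l n_l = n^k$ that the paper leaves implicit when collapsing the per-class costs into the stated formula.
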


\begin{proof}
The correctness of the algorithm stems from Lemma~\ref{lem:disjointness}.

\medskip

\noindent
\emph{Complexity.}
The overall complexity of the algorithm, stemming from the computation of subset sums inside the congruence classes and the combination of those sums, is
\[
    O \left( t^k \log t \left( \frac{n^k \log n}{b^k} + b \right) \right) =
    \tilde{O} (n^{k / (k+1)} t^k),
\]
where the right-hand side is obtained by setting $b = \sqrt[k+1]{n^k \log n}$.

\end{proof}

\begin{algorithm}[htb]
    \caption{$\texttt{DisjointSC}(S,t)$}
    \label{alg:koil1}
    \begin{algorithmic}[1]
        \Require A set $S$ of $n$ positive integers and an upper bound integer $t$.
    	\Ensure The set $Z \subseteq (\mathcal{S}_t(S))^k \times [n]^k$ of all $k$-tuples of subset sums occurring from disjoint subsets of $S$ up to $t$, along with their respective cardinality information.

        \If{$S = \{s\}$}
            \Ret{$\{ 0 \}^{2k} \cup \{(s,\underbrace{0,\ldots,0}_{k-1}, 1, \underbrace{0,\ldots,0}_{k-1})\} \cup \ldots \cup \{(\underbrace{0,\ldots,0}_{k-1},s,\underbrace{0,\ldots,0}_{k-1},1)\}$}
        \EndIf
        \State $T \gets$ an arbitrary subset of $S$ of size $\lfloor \frac{n}{2} \rfloor$
        \Ret{$\texttt{DisjointSC}(T,t) \oplus \texttt{DisjointSC}(S \setminus T,t)$}
    \end{algorithmic}
\end{algorithm}

\begin{algorithm}[htb]
    \caption{$\texttt{DisjointSS}(Z,t)$}
    \label{alg:koil2}
    \begin{algorithmic}[1]
        \Require A set $Z$ of $n$ positive integers and an upper bound integer $t$.
    	\Ensure The set $S \subseteq (\mathcal{S}_t (Z))^k$ of all $k$-tuples of subset sums up to $t$ occurring from disjoint subsets of $Z$.

        \State $b \gets \lfloor \sqrt[k+1]{n^k \log n} \rfloor$
        \For{$l \in [b-1]$}
            \State $S_l \gets Z \cap \{ x \in \mathbb{N} \mid x \equiv l \Mod{b} \}$
            \State $Q_l \gets \{ \lfloor x/b \rfloor \mid x \in S_l \}$
            \State $\mathcal{R}(Q_l) \gets \texttt{DisjointSC}(Q_l, \lfloor t/b \rfloor)$
            \State $R_l \gets \{ (z_1 b + j_1 l, \ldots, z_k b + j_k l) \mid (z_1, \ldots, z_k, j_1, \ldots, j_k) \in \mathcal{R} (Q_l) \}$ \label{line:koil}
        \EndFor
        \Ret{ $R_0 \oplus_t \cdots \oplus_t R_{b-1}$}
    \end{algorithmic}
\end{algorithm}
\subsection{Solving \kSS\ in randomised $\tilde{O}(n + t^k)$ time}
\label{subsec:bringmann}

We will show that one can decide \kSS\ in $\tilde{O}(n + t^k)$ time, where $t = \max \{ t_1, \ldots, t_k \}$, by building upon the techniques used in~\cite{Bringmann17}.
In particular, we will present an algorithm that successfully detects, with high probability, whether there exist $k$ disjoint subsets each summing up to $t_i$ respectively.
In comparison to the algorithm of~\cite{Bringmann17}, a couple of modifications are required, which we will first briefly summarise prior to presenting the complete algorithm.

\begin{itemize}
    \item In our version of \ColorCoding, the number of repetitions of random partitioning is increased, without asymptotically affecting the complexity of the algorithm, since it remains $O(\log (1/\delta))$.

    \item In our version of \ColorCoding, after each partition of the elements, we execute the FFT operations on the $k$-modified characteristic polynomials of the resulting sets.
    Thus, for each element $s_i$ we introduce $k$ points $(s_i, 0, \ldots, 0)$, $\ldots$, $(0, \ldots, s_i)$, represented by polynomial $x^{s_i}_1 + \ldots + x^{s_i}_k$.
    Hence \ColorCoding\ returns a set of points, each of which corresponds to $k$ sums, realisable by disjoint subsets of the initial set.
    
    \item Algorithm \ColorCodingLayer\ needs no modification, except that now the FFT operations concern sets of points and not sets of integers, hence the complexity analysis differs.
    Additionally, the algorithm returns a set of points, each of which corresponds to realisable sums by disjoint subsets of the $l$-layer input instance.
    
    \item The initial partition of the original set to $l$-layer instances remains as is, and the FFT operations once more concern sets of points instead of sets of integers.
\end{itemize}

\subsubsection{Small cardinality solutions}

We will first describe the modified procedure \ColorCoding\ for solving \kSS\ if the solution size is small, i.e., an algorithm that, given an integer $c$, finds $k$-tuples of sums $(\Sigma (Y_1), \ldots, \Sigma (Y_k))$, where $\Sigma (Y_i) \leq t$ and $Y_i \subseteq Z$ are disjoint subsets of input set $Z$ of cardinality at most $c$, and determines with high probability whether there exists a tuple $(t_1, \ldots, t_k)$ among them, for some given values $t_i$.

We randomly partition our initial set $Z$ to $c^2$ subsets $Z_1, \ldots, Z_{c^2}$, i.e., we assign each $z\in Z$ to a set $Z_i$ where $i$ is chosen independently and uniformly at random from $\{1, \ldots, c^2\}$.
We say that this random partition \emph{splits} $Y\subseteq Z$ if $\rvert Y \cap Z_i\rvert  \leq 1, \forall i$.
If such a split occurs, the set returned by \ColorCoding\ will contain\footnote{In this context, ``contain'' is used to denote that $\Sigma(Y) = s_i$ for some $i$ in a $k$-tuple $s = (s_1, \ldots, s_k) \in S$, where $S$ is the resulting set of \ColorCoding.} $\Sigma(Y)$.
Indeed, by choosing the element of $Y$ for those $Z_i$ that $\rvert Y \cap Z_i\rvert =1$ and $0$ for the rest, we successfully generate $k$-tuples containing $\Sigma (Y)$ through the pairwise sum operations.
The algorithm returns only valid sums, since no element is used more than once in each sum, because each element is assigned uniquely to a $Z_i$ for each distinct partition.

Our intended goal is thus, for any $k$ random disjoint subsets, to have a partition that splits them all.
Such a partition allows us construct a $k$-tuple consisting of all their respective sums through the use of FFT.
Hence, the question is to determine how many random partitions are required to achieve this with high probability.

The answer is obtained by observing that the probability to split a subset $Y$ is the same as having $\lvert Y \rvert$ different balls in $\lvert Y \rvert$ distinct bins, when throwing $\lvert Y \rvert$ balls into $c^2$ different bins, as mentioned in~\cite{Bringmann17}.
Next, we compute the probability to split $k$ random disjoint subsets $Y_1, \ldots, Y_k$ in the same partition.

The probability that a split occurs at a random partition for $k$ random disjoint subsets $Y_1, \ldots, Y_k \subseteq Z$ is
\begin{align*}
    \Pr[\text{all } Y_i \text{ are split}] =
    \prod_{i=1}^k \Pr[Y_i\text{ is split}]&=\\
    \frac{c^2-1}{c^2} \cdots \frac{c^2-(\lvert Y_1 \rvert -1)}{c^2} \cdots
    \frac{c^2-1}{c^2} \cdots \frac{c^2-(\lvert Y_k \rvert -1)}{c^2} &\geq \\
    \left( \frac{c^2-(\lvert Y_1 \rvert -1)}{c^2}\right)^{\lvert Y_1 \rvert} \cdots
    \left( \frac{c^2-(\lvert Y_k \rvert -1)}{c^2}\right)^{\lvert Y_k \rvert} &\geq \\
    \left( 1 - \frac{1}{c} \right)^{c} \cdots \left( 1 - \frac{1}{c} \right)^{c} \geq
    \left( \frac{1}{2} \right)^2 \cdots \left( \frac{1}{2} \right)^2 &= \frac{1}{4^k}
\end{align*}

Hence, for $\beta = 4^k / (4^k - 1)$, $r = \lceil \log_{\beta} (1/\delta) \rceil$ repetitions yield the desired success probability of  $1 - (1 - 1 / 4^k)^r \geq 1 - \delta$.
In other words, after $r$ random partitions, for any $k$ random disjoint subsets $Y_1, \ldots, Y_k$, there exists, with probability at least $1 - \delta$, a partition that splits them all.

\begin{algorithm}[htb]
    \caption{$\ColorCoding(Z,t,c,\delta)$}
    \label{alg:ColorCoding}
    \begin{algorithmic}[1]
        \Require A set $Z$ of positive integers, an upper bound $t$, a size bound $c \geq 1$ and an error probability $\delta >0$.
    	\Ensure A set $S \subseteq (\mathcal{S}_t (Z))^k$ containing any $( \Sigma (Y_1), \ldots,  \Sigma (Y_k))$ with probability at least $1 - \delta$, where $Y_1, \ldots, Y_k \subseteq Z$ disjoint subsets with $\Sigma (Y_1), \ldots, \Sigma (Y_k) \leq t$ and $\lvert Y_1 \rvert, \ldots, \lvert Y_k \rvert \leq c$.

        \State $S \gets \emptyset$
        \State $\beta \gets 4^k / (4^k - 1)$
        \For{$j = 1, \ldots, \lceil \log_{\beta} (1 / \delta)\rceil$}
            \State randomly partition $Z = Z_1 \cup Z_2 \cup \cdots \cup Z_{c^2}$
            \For{$i = 1, \ldots, c^2$}
                \State $Z'_i \gets (Z_i \times \{ 0 \}^{k - 1}) \cup \ldots \cup (\{ 0 \}^{k - 1} \times Z_i)$
            \EndFor
            \State $S_j \gets Z'_1 \oplus_t \cdots \oplus_t Z'_{c^2}$
            \State $S \gets S \cup S_j$
        \EndFor
        \Ret{$S$}
    \end{algorithmic}
\end{algorithm}

\begin{lemma}
Given a set $Z$ of positive integers, a sum bound $t$, a size bound $c \geq 1$ and an error probability $\delta >0$, \ColorCoding($Z,t,k,\delta$) returns a set $S$ consisting of any $k$-tuple $(\Sigma (Y_1), \ldots, \Sigma (Y_k))$ with probability at least $1 - \delta$, where $Y_1, \ldots, Y_k \subseteq Z$ are disjoint subsets with $\Sigma (Y_1), \ldots, \Sigma (Y_k) \leq t$ and $\lvert Y_1 \rvert, \ldots, \lvert Y_k \rvert \leq c$, in $O(c^2 \log (1 / \delta) t^k \log t)$ time. 
\end{lemma}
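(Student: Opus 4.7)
My plan is to combine a deterministic recovery argument with the probability computation already written out just above the algorithm, and to handle the runtime by a direct FFT accounting.

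For the conditional correctness, I would argue that if in iteration $j$ the random partition $Z = Z_1 \cup \cdots \cup Z_{c^2}$ has the property that every bin intersects each $Y_i$ in at most one element, then the tuple $(\Sigma(Y_1), \ldots, \Sigma(Y_k))$ lies in $S_j = Z'_1 \oplus_t \cdots \oplus_t Z'_{c^2}$. The construction: for each $y \in Y_i$ in its unique bin $Z_\ell$, pick from $Z'_\ell$ the point placing $y$ in coordinate $i$; for every bin missing all $Y_i$'s, pick the origin implicitly allowed by $\oplus_t$. Summing the per-bin choices coordinatewise reproduces the target tuple, whose coordinates are all bounded by $t$ and therefore retained by the $t$-cap.

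For the probability, I would directly invoke the calculation preceding the algorithm: disjointness of the $Y_i$'s makes the splitting events independent, each occurs with probability at least $(1 - 1/c)^c \geq 1/4$, so all $k$ occur simultaneously with probability at least $1/4^k$. Consequently, with $\beta = 4^k/(4^k - 1)$ and $r = \lceil \log_\beta (1/\delta) \rceil$ independent iterations, the probability that no iteration yields a splitting partition is at most $(1 - 1/4^k)^r = \beta^{-r} \leq \delta$, which combined with the deterministic recovery yields the advertised success probability at least $1 - \delta$.

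For the runtime, each of the $r = O(\log(1/\delta))$ outer iterations performs $c^2 - 1$ $t$-capped pairwise sums over subsets of $[t]^k$; by the multidimensional FFT lemma of Section~\ref{sec:prelim}, each such sum costs $O(t^k \log t)$. Hence one iteration takes $O(c^2 t^k \log t)$ and, since $k$ is treated as a constant so that $\log \beta = \Theta(1)$, the total time is $O(c^2 \log(1/\delta) t^k \log t)$.

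The main obstacle will be the deterministic recovery step. One has to verify that the single-point-per-bin restriction of $\oplus_t$ is still compatible with simultaneously reconstructing all $k$ sums: every element of $\bigcup_i Y_i$ must be routed to its own coordinate without two elements competing for the same bin. Once this bookkeeping is settled and tied back to the splitting condition bounded by the probability calculation, the pieces combine immediately to give both the success probability and the runtime.
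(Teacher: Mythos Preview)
Your overall approach mirrors the paper's proof, which likewise defers to the preceding probability calculation and then asserts that the FFT operations produce the $k$-tuple, with the same runtime accounting. However, the ``main obstacle'' you flag at the end is a genuine gap rather than bere bookkeeping, and the paper's own proof glosses over the same point. The probability calculation you invoke bounds the event ``each $Y_i$ is split,'' meaning $|Y_i \cap Z_\ell| \le 1$ for every $i$ and $\ell$; your recovery construction needs the strictly stronger event ``$\bigcup_i Y_i$ is split,'' since $Z'_1 \oplus_t \cdots \oplus_t Z'_{c^2}$ selects a single point from each $Z'_\ell \cup \{0\}^k$ and every point of $Z'_\ell$ has exactly one nonzero coordinate. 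If $y \in Y_1$ and $y' \in Y_2$ with $y \neq y'$ both land in bin $Z_\ell$, then $Y_1$ and $Y_2$ are each still split in the per-set sense, yet no single choice from $Z'_\ell$ can contribute $y$ to coordinate~$1$ and $y'$ to coordinate~$2$ simultaneously, so the tuple $(\Sigma(Y_1),\ldots,\Sigma(Y_k))$ need not appear in $S_j$. The event bounded by the calculation and the event your construction requires therefore cannot simply be ``tied back'' to one another.

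The repair is straightforward and preserves the stated complexity: bound $\Pr[\bigcup_i Y_i \text{ is split}]$ directly. Since the $Y_i$ are pairwise disjoint, $|\bigcup_i Y_i| \le kc$, and the same balls-in-bins estimate with $kc$ balls into $c^2$ bins (or, to avoid edge cases when $c$ is small relative to $k$, into $(kc)^2$ bins) yields a collision-free partition with probability at least a positive constant depending only on $k$. Hence the number of repetitions remains $O(\log(1/\delta))$ and the running time stays $O(c^2 \log(1/\delta)\, t^k \log t)$.
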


\begin{proof}
As we have already explained, if there exist $k$ disjoint subsets $Y_1, \ldots, Y_k \subseteq Z$ with $\Sigma (Y_1), \ldots, \Sigma (Y_k) \leq t$ and $\lvert Y_1 \rvert, \ldots, \lvert Y_k \rvert \leq c$, our algorithm guarantees that with probability at least $1 - \delta$, there exists a partition that splits them all.
Subsequently, the FFT operations on the corresponding points produces the $k$-tuple.

\medskip

\noindent
\emph{Complexity.}
The algorithm performs $O(\log{(1/\delta)})$ repetitions.
To compute a pairwise sum of $k$ variables up to $t$, $O(t^k \log t)$ time is required.
In each repetition, $c^2$ pairwise sums are computed.
Hence, the total complexity of the algorithm is $O(c^2 \log (1 / \delta) t^k \log t)$.
\end{proof}

\subsubsection{Solving \kSS\ for $l$-layer instances}

In this part, we will prove that we can use the algorithm \ColorCodingLayer\ from~\cite{Bringmann17}, to successfully solve \kSS\ for $l$-layer instances, defined below.

For $l \geq 1$, we call $(Z,t)$ an \emph{$l$-layer instance} if $Z \subseteq [t / l, 2t / l]$ or $Z \subseteq [0, 2t / l]$ and $l \geq n$.
In both cases, $Z\subseteq [0,2t/l]$ and for any $Y \subseteq Z$ with $\Sigma (Y) \leq t$, we have $\lvert Y \rvert \leq l$.
The algorithm of~\cite{Bringmann17} successfully solves the \SubS\ problem for $l$-layer instances.
We will show that by calling the modified \ColorCoding\ algorithm and modifying the FFT operations so that they concern sets of points, the algorithm successfully solves \kSS\ in such instances.

\begin{algorithm}[htb]
    \caption{$\ColorCodingLayer(Z,t,l,\delta)$}
    \label{alg:ColorCodingLayer}
    \begin{algorithmic}[1]
        \Require An $l$-layer instance $(Z,t)$ and an error probability $\delta \in (0, 1 / 2^{k+1}]$.
    	\Ensure A set $S \subseteq (\mathcal{S}_t (Z))^k$ containing any $( \Sigma (Y_1), \ldots,  \Sigma (Y_k))$ with probability at least $1 - \delta$, where $Y_1, \ldots, Y_k \subseteq Z$ disjoint subsets with $\Sigma (Y_1), \ldots, \Sigma (Y_k) \leq t$.

        \lIf{$l < \log (l / \delta)$}{\Return $\ColorCoding(Z, t, l, \delta)$}
        \State $m \gets l / \log(l / \delta)$ rounded up to the next power of 2
        \State randomly partition $Z = Z_1 \cup Z_2 \cup \cdots \cup Z_m$
        \State $\gamma \gets 6 \log (l / \delta)$
        \For{$j = 1, \ldots, m$}
            \State $S_j \gets \texttt{ColorCoding}(Z_j, 2 \gamma t / l, \gamma, \delta / l)$ \label{line:bringmann}
        \EndFor
        \For{$h=1,\ldots,\log m$} \Comment{combine $S_j$ in a binary-tree-like way}
            \For{$j=1,\ldots,m/2^h$}
                \State $S_j \gets S_{2j-1} \oplus_{2^h \cdot 2\gamma t/l} S_{2j}$
            \EndFor
        \EndFor
        \Ret{$S_1 \cap [t]^k$}

    \end{algorithmic}
\end{algorithm}

We will now prove the correctness of the algorithm.
Let $X^1, \ldots, X^k \subseteq Z$ be disjoint subsets with $\Sigma(X^1), \ldots, \Sigma(X^k) \leq t$. By~\cite[Claim~3.3]{Bringmann17}, we have that $\Pr[\lvert Y_i \rvert \geq 6 \log{(l/\delta)}]\leq \delta / l$, where $Y_i = Y \cap Z_i$, for any $Y \subseteq Z$ with at most $l$ elements.
Hence, the probability that $\lvert X^j_i \rvert \leq 6\log{(l/\delta)}$, for all $i = 1, \ldots, m$ and $j = 1, \ldots, k$ is
\begin{align*}
    \Pr \left[\bigwedge_{i=1}^{m} \bigwedge_{j=1}^{k} \lvert X^j_i \rvert \leq 6\log{(l/\delta)} \right] &\geq
    1 - \left( \sum_{i = 1}^{m} \sum_{j=1}^{k} \Pr \left[ \lvert X^j_i \rvert > 6 \log(l/\delta) \right] \right)\\
    &\geq 1 - k m \delta / l.
\end{align*}
%
\ColorCoding\ computes $(\Sigma(X^1_i), \ldots, \Sigma(X^k_i))$ with probability at least $1-\delta$.
This happens for all $i$ with probability at least $1 - m \delta / l$.
Then, combining the resulting sets indeed yields the $k$-tuple $(\Sigma(X^1), \ldots, \Sigma(X^k))$.
The total error probability is at most $(k+1) m \delta / l$.
Assume that $\delta \in (0,1 / 2^{k+1}]$.
Since $l \geq 1$ and $\delta \leq 1 / 2^{k + 1}$, we have $\log{(l/\delta)} \geq (k + 1)$.
Hence, the total error probability is bounded by $\delta$.
This gives the following.

\begin{lemma}
Given an $l$-layer instance $(Z,t)$, upper bound $t$ and error probability $\delta \in (0,1/2^{k+1}]$, \ColorCodingLayer($Z,t,l,\delta$) solves \kSS\ with sum at most $t$ in time $O \left( t^k \log t \frac{\log^{k + 2} (l/\delta)}{l^{k - 1}} \right)$ with probability at least $1 - \delta$.
\end{lemma}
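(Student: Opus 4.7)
The correctness is essentially carried out in the prose preceding the statement, so for that part my plan is simply to consolidate: Claim~3.3 of~\cite{Bringmann17} applied separately to each of the $k$ disjoint solution sets $X^1,\dots,X^k$, together with a union bound, guarantees that, with probability at least $1-km\delta/l$, every bucket $Z_j$ contains at most $\gamma=6\log(l/\delta)$ elements of each $X^i$; the already-proved \ColorCoding\ lemma then ensures that, conditional on this event, each inner call produces the $k$-tuple of sums $(\Sigma(X^1_j),\dots,\Sigma(X^k_j))$ with probability at least $1-\delta/l$, and composing these with $\oplus_{\cdot}$ in the binary tree yields the desired global $k$-tuple $(\Sigma(X^1),\dots,\Sigma(X^k))$. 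The hypothesis $\delta\leq 2^{-(k+1)}$ is exactly what forces $\log(l/\delta)\geq k+1$ and so makes the combined failure probability $(k+1)m\delta/l$ sit below $\delta$.

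The meat of the proof is the running time. I would first dispose of the base case $l<\log(l/\delta)$: there $l=O(\log(l/\delta))$, and the direct call to \ColorCoding\ costs $O(l^2\log(1/\delta)\,t^k\log t)$, which is absorbed into the claimed bound. In the recursive branch, substituting $c=\gamma$, $t'=2\gamma t/l$, and error $\delta/l$ into the complexity of \ColorCoding\ gives a cost of $O\!\left(\gamma^2\log(l/\delta)(\gamma t/l)^k\log t\right)=O\!\left(\gamma^{k+2}(t/l)^k\log t\right)$ per inner call; summing over the $m$ buckets and using $m\gamma=\Theta(l)$ together with $\gamma=\Theta(\log(l/\delta))$ collapses to
\[
O\!\left(m\,\gamma^{k+2}(t/l)^k\log t\right) \;=\; O\!\left(\frac{\log^{k+2}(l/\delta)}{l^{k-1}}\,t^k\log t\right),
\]
which already matches the target bound.

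It remains to show that the subsequent binary-tree combining phase is absorbed into this estimate. At level $h\in\{1,\dots,\log m\}$ the algorithm performs $m/2^h$ multidimensional $\oplus$-operations, each capped coordinate-wise at $2^h\cdot 2\gamma t/l$ and thus costing $O\!\left((2^h\gamma t/l)^k\log t\right)$ by the multidimensional FFT lemma; the total at level $h$ is then $O\!\left(m\,2^{h(k-1)}(\gamma t/l)^k\log t\right)$, a geometric series in $h$ dominated (for $k\geq 2$) by its top term $O\!\left((m\gamma t/l)^k\log t\right)=O(t^k\log t)$ via $m\gamma=\Theta(l)$. The main obstacle I anticipate is exactly this last bookkeeping step: verifying that the cap-doubling and the resulting $2^{h(k-1)}$ growth is precisely compensated by the halving of the pair count, so that the combining cost never overtakes the cost of the inner \ColorCoding\ calls in the regime $l^{k-1}=O(\log^{k+2}(l/\delta))$ relevant to the subsequent use of the lemma.
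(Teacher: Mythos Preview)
Your proposal is correct and follows essentially the same route as the paper: the correctness argument (Claim~3.3 of~\cite{Bringmann17} plus union bounds over the $k$ sets and $m$ buckets, then composition via the \ColorCoding\ lemma, with $\delta\le 2^{-(k+1)}$ forcing $\log(l/\delta)\ge k+1$) and the two-part complexity analysis (cost of the $m$ inner calls collapsing to $O\!\big(\log^{k+2}(l/\delta)\,t^k\log t/l^{k-1}\big)$, then the geometric-series bound on the binary-tree combining phase yielding $O(t^k\log t)$) match the paper's proof line for line. You are, if anything, slightly more careful: you handle the base case $l<\log(l/\delta)$ explicitly and you flag that dominance of the combining cost requires the regime $l^{k-1}=O(\log^{k+2}(l/\delta))$, whereas the paper simply asserts that the combining step is dominated by the inner calls.
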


\noindent
\emph{Complexity.}
The time required to compute the sets of $k$-tuples $S_1, \ldots, S_m$ by calling \ColorCoding\ is
\begin{align*}
    O(m \gamma^2 \log (l/\delta) (\gamma t / l)^k \log (\gamma t / l)) &=
    O \left( \frac{\gamma^{k + 2}}{l^{k - 1}} t^k \log t \right)\\
    &=
    O \left( \frac{\log^{k + 2} (l/\delta)}{l^{k - 1}} t^k \log t \right).
\end{align*}
Combining the resulting sets costs
\begin{align*}
    O \left( \sum_{h=1}^{\log m} \frac{m}{2^h} (2^h \gamma t/l)^k \log (2^h \gamma t/l) \right) &=
    O \left(\sum_{h=1}^{\log m} \frac{2^{h (k - 1)}}{m^{k - 1}} t^k \log t \right)\\
    &= O \left( \frac{t^k \log t}{m^{k-1}} \sum_{h=1}^{\log m} \left( 2^{k - 1} \right)^h \right)\\
    &= O \left(\frac{t^k \log t}{m^{k-1}} \left( 2^{k - 1} \right)^{\log m} \right)\\
    &= O(t^k \log t),
\end{align*}
since for $c > 1$, we have that $O \left(\sum_{k=0}^n c^k \right) = O(c^n)$, which is dominated by the computation of $S_1, \ldots, S_m$.

\smallskip
\noindent
Hence, \ColorCodingLayer\ has total complexity $O \left( t^k \log t \cdot \frac{\log^{k + 2} (l/\delta)}{l^{k - 1}} \right)$.

\subsubsection{General Case}

It remains to show that for every instance $(Z,t)$, we can construct $l$-layer instances and take advantage of \ColorCodingLayer\ in order to solve \kSS\ for the general case.
This is possible by partitioning set $Z$ at $t/2^i$ for $i=1, \ldots, \lceil \log n \rceil -1$.
Thus, we have $O(\log n)$ $l$-layers $Z_1, \ldots, Z_{\lceil \log n\rceil}$.
On each layer we run \ColorCodingLayer\ and subsequently combine the results using pairwise sums.

\begin{algorithm}[htb]
    \caption{$\texttt{kSubsetSum}(Z,\delta, t)$}
    \label{alg:kss_bringmann}
    \begin{algorithmic}[1]
        \Require A set of positive integers $Z$ of cardinality $n$, an upper bound $t$ and an error probability $\delta$.
    	\Ensure A set $S \subseteq (\mathcal{S}_t (Z))^k$ containing any $( \Sigma (Y_1), \ldots,  \Sigma (Y_k))$ with probability at least $1 - \delta$, where $Y_1, \ldots, Y_k \subseteq Z$ disjoint subsets with $\Sigma (Y_1), \ldots, \Sigma (Y_k) \leq t$.
        
        \State partition $Z$ into $Z_i \gets Z \cap (t / 2^i, t / 2^{i-1}]$ for $i = 1, \ldots, \lceil \log n \rceil - 1$ and $Z_{\lceil \log n \rceil} \leftarrow Z \cap [0, t / 2^{\lceil \log n \rceil - 1}]$
        \State $S \gets \emptyset$
        \For{$i = 1, \ldots, \lceil \log n \rceil$}
            \State $S_i \gets \ColorCodingLayer(Z_i, t, 2^i, \delta / \lceil \log n \rceil)$
            \State $S \gets S \oplus_t S_i$
        \EndFor
        \Ret{$S$}
    \end{algorithmic}
\end{algorithm}

\smallskip
\noindent
We will now prove the main theorem of this section.

\begin{theorem} \label{theorem:bringmann}
Given a set $Z \subseteq \mathbb{N}$ of size $n$ and targets $t_1, \ldots, t_k$, one can decide \kSS\ in $\tilde{O}(n + t^k)$ time w.h.p., where $t = \max \{ t_1, \ldots, t_k \}$.
\end{theorem}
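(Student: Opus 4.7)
The plan is to analyse Algorithm~\ref{alg:kss_bringmann}, which reduces a general instance to $O(\log n)$ $l$-layer instances that can be handled by the previously established \ColorCodingLayer\ routine. First I would verify that the partition of $Z$ at the thresholds $t/2^{i}$ produces sets $Z_{1},\ldots,Z_{\lceil \log n \rceil}$ each matching the $l$-layer definition: for $i < \lceil \log n \rceil$ we have $Z_{i} \subseteq (t/2^{i},\,t/2^{i-1}] \subseteq [t/l,\,2t/l]$ with $l = 2^{i}$, while the tail $Z_{\lceil \log n \rceil}$ falls into the second clause of the definition since $l = 2^{\lceil \log n \rceil} \geq n$. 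Each layer is therefore a legitimate input for \ColorCodingLayer\ with layer parameter $l_{i} = 2^{i}$, invoked with error budget $\delta/\lceil \log n \rceil$.

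For correctness I would fix arbitrary disjoint subsets $Y_{1},\ldots,Y_{k} \subseteq Z$ with $\Sigma(Y_{j}) \leq t$ and decompose them layer-wise as $Y_{j}^{i} = Y_{j} \cap Z_{i}$. The sets $Y_{1}^{i},\ldots,Y_{k}^{i}$ remain pairwise disjoint inside $Z_{i}$, so by the guarantee of \ColorCodingLayer, the tuple $(\Sigma(Y_{1}^{i}),\ldots,\Sigma(Y_{k}^{i}))$ appears in $S_{i}$ with probability at least $1 - \delta/\lceil \log n \rceil$. A union bound over the $\lceil \log n \rceil$ layers yields overall success probability at least $1-\delta$. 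Since elements drawn from distinct layers are automatically disjoint, iteratively combining the $S_{i}$ through the $t$-capped pairwise sum operator $\oplus_{t}$ aggregates the partial tuples into $(\Sigma(Y_{1}),\ldots,\Sigma(Y_{k}))$, and testing membership of $(t_{1},\ldots,t_{k})$ in the final set $S$ decides \kSS.

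For the running time I would add the three contributions appearing in the algorithm: the one-time partition of $Z$ costs $O(n)$; the $\lceil \log n \rceil$ calls to \ColorCodingLayer\ cost, by the previous lemma, a total of
\[
    \sum_{i=1}^{\lceil \log n \rceil} O\!\left( t^{k} \log t \cdot \frac{\log^{k+2}(2^{i}\lceil \log n \rceil/\delta)}{2^{i(k-1)}} \right) = \tilde{O}(t^{k}),
\]
since the geometric series $\sum_{i} 2^{-i(k-1)}$ is $O(1)$ for $k \geq 2$ and contributes only an extra $\log n$ factor for $k=1$, either way absorbed by $\tilde{O}$; finally the $\lceil \log n \rceil$ combining $\oplus_{t}$ operations on $k$-dimensional sets each cost $O(t^{k} \log t)$, contributing another $\tilde{O}(t^{k})$. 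Summing gives $\tilde{O}(n + t^{k})$.

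The main technical obstacle I anticipate is bookkeeping around the probability amplification: \ColorCodingLayer\ requires $\delta \in (0,\,1/2^{k+1}]$, so one must verify that the passed value $\delta/\lceil \log n \rceil$ respects this bound (if not, the initial $\delta$ can be shrunk at no asymptotic cost), and the union bound over layers must be consistent with the per-layer guarantee. A secondary subtlety is the tacit use of cross-layer disjointness when telescoping the $\oplus_{t}$ operations, but this reduces to the observation that the layers partition $Z$, so no extra argument analogous to Lemma~\ref{lem:disjointness} is required.
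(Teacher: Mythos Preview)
Your proposal is correct and follows essentially the same approach as the paper: decompose any witnessing disjoint $k$-tuple layer-wise, apply the \ColorCodingLayer\ guarantee per layer with error $\delta/\lceil\log n\rceil$, take a union bound, and sum the per-layer running times together with the $\oplus_t$ combinations. If anything, you supply more detail than the paper does (explicitly checking the $l$-layer hypotheses, the geometric series in $2^{-i(k-1)}$, and the $\delta\le 1/2^{k+1}$ side condition), but the structure is the same.
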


\begin{proof}
Let $X^1, \ldots, X^k \subseteq Z$ be $k$ disjoint subsets with $\Sigma (X^1), \ldots, \Sigma (X^k) \leq t$, and $X^j_i = X^j \cap Z_i$, for $j = 1, \ldots, k$ and $i = 1, \ldots, \lceil \log n \rceil$.
Each call to \ColorCodingLayer\ returns a $k$-tuple $(\Sigma (X^1_i), \ldots, \Sigma (X^k_i))$ with probability at least $1 - \delta / \lceil \log n \rceil$, hence the probability that all calls return the corresponding $k$-tuple is
\begin{align*}
    \Pr \left[\text{\ColorCodingLayer\ returns } (\Sigma (X^1_i), \ldots, \Sigma (X^k_i)), \forall i \right] &=
    1 - \Pr[\text{some call fails}]\\
    &\geq 1 - \sum_{i=1}^{\lceil \log n\rceil} \frac{\delta}{\lceil \log n\rceil}
    = 1-\delta
\end{align*}
If all calls return the corresponding $k$-tuple, the algorithm successfully constructs the $k$-tuple $(\Sigma (X^1), \ldots, \Sigma (X^k))$.
Thus, with probability at least $1-\delta$, the algorithm solves \kSS.

\medskip

\noindent
\emph{Complexity.}
Reading the input requires $\Theta(n)$ time.
In each of the $\Theta (\log n)$ repetitions of the algorithm, we make a call to \ColorCodingLayer, plus compute a pairwise sum.
The computation of the pairwise sum requires $O(t^k \log t)$ time since it concerns $k$-tuples.
For each call to \ColorCodingLayer, we require $O \left( t^k \log t \frac{\log^{k + 2} \left( \frac{2^i \log n}{\delta} \right) }{2^{i (k - 1)}} \right)$ time.
Hence, the overall complexity is
\[
O \left( n + t^k \log t \log n + \sum_{i=1}^{\log n} t^k \log t \frac{\log^{k + 2} \left( \frac{2^i \log n}{\delta} \right) }{2^{i (k - 1)}} \right) =\tilde{O}(n + t^k).
\]
\end{proof}

\section{\kSS\ with cardinality constraints}
\label{sec:kss_cardinality}
The algorithms presented in the previous section can be extended to a variation of the \kSS\ problem, where along with the target values, the cardinality of the respective solution sets is given as part of the input.

Therefore, consider $c_i$, $i = 1, \ldots, k$ such that $\sum c_i \leq n$, to be the desired cardinality of set $Z_i$, where $\Sigma(Z_i) = t_i$, and let $c = \max \{ c_1, \ldots, c_k \}$.
We seek to determine whether there exist disjoint subsets $Z_i \subseteq Z, i = 1, \ldots, k$ of sum $\Sigma (Z_i) = t_i$ and cardinality $\lvert Z_i \rvert = c_i$.

As it will become evident, the previously proposed algorithms for \kSS\ can be slightly modified in order to cope with this additional restriction.
The main idea lies on the addition of extra variables in the resulting polynomial, each of which represents the cardinality of one of the solution sets.
This results in a blowup of the order of $O(c^k)$ in the final complexity of the algorithms, since a variable is added for each of the $k$ subsets of the solution.
\subsection{A deterministic approach}
\label{subsec:koiliaris_cardinality}
In the case of the deterministic algorithm proposed in Subsection~\ref{subsec:koiliaris}, it suffices to modify the procedure in line~\ref{line:koil} of Algorithm~\ref{alg:koil2}.

Here, instead of simply reconstructing the actual partial sums obtained from elements belonging to the same congruence class,
it is necessary to additionally keep the relevant cardinality information.
Subsequently, the final FFT operation takes into account this information in order to verify the existence of a solution, thus requiring additional time due to the extra variables.

\begin{theorem}
Given a set $Z = \{ z_1, \ldots, z_n \}\subseteq \mathbb{N}$ of size $n$, targets $t_1,\dots ,t_k$ and cardinalities $c_1, \ldots, c_k$, Algorithm~\ref{alg:koil2_mod} can decide \kSS\ with cardinality constraints in time $\tilde{O} (\sqrt[k+1]{n^k c^{k^2}} t^k)$, where $t = \max \{t_1,\dots,t_k \}$ and $c = \max \{ c_1, \ldots, c_k \}$.
\end{theorem}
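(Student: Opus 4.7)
The plan is to adapt Algorithm~\ref{alg:koil2} so that the cardinality components produced by $\texttt{DisjointSC}$ survive the cross-class combination, rather than being consumed during the sum reconstruction in line~\ref{line:koil}. Concretely, in place of that line I would compute
\[
R_l \gets \{ (z_1 b + j_1 l, \ldots, z_k b + j_k l, j_1, \ldots, j_k) \mid (z_1, \ldots, z_k, j_1, \ldots, j_k) \in \mathcal{R}(Q_l) \},
\]
and then perform the final combination $R_0 \oplus_t \cdots \oplus_t R_{b-1}$ as a $2k$-dimensional pairwise sum over $[t]^k \times [c]^k$, discarding any tuple whose cardinality coordinate exceeds $c$. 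The decision procedure accepts iff the $2k$-tuple $(t_1, \ldots, t_k, c_1, \ldots, c_k)$ lies in the resulting set.

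Correctness transfers essentially verbatim from Lemma~\ref{lem:disjointness}: every tuple $(t'_1, \ldots, t'_k, c'_1, \ldots, c'_k)$ in the output is produced through pairwise sums of terms originating from disjoint subtrees of the recursion, so it witnesses $k$ pairwise disjoint subsets $Z_1, \ldots, Z_k \subseteq Z$ with $\Sigma(Z_i) = t'_i$ and $\lvert Z_i \rvert = c'_i$. Conversely, any solution honouring the cardinality constraints is preserved throughout the FFT-based computation, so the target tuple appears whenever it exists.

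For the complexity, the within-class analysis is identical to that of Subsection~\ref{subsec:koiliaris}: each congruence class $l$ still contributes $\tilde{O}(n_l^k (t/b)^k)$ since the cardinality dimension is naturally bounded by $n_l$, and $\sum_l n_l^k \leq n^k$ aggregates to $\tilde{O}(n^k (t/b)^k)$. The combining step is where the analysis departs from the unconstrained case: each cross-class FFT now operates on arrays of size $t^k c^k$ instead of $t^k$, so the $b$ combinations cost $\tilde{O}(b\,t^k c^k)$ in total. Balancing $n^k / b^k$ against $b c^k$ by choosing $b = \lfloor \sqrt[k+1]{n^k \log n / c^k} \rfloor$ equates the two summands and yields $\tilde{O}(n^{k/(k+1)} c^{k^2/(k+1)} t^k) = \tilde{O}(\sqrt[k+1]{n^k c^{k^2}} t^k)$.

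The main obstacle I anticipate is the correct bookkeeping of the two asymmetric cardinality caps and the rederivation of the optimal modulus $b$. Within a class the coordinates $j_i$ must remain available up to $n_l$ since they serve as the coefficients of $l$ in the sum reconstruction $z_i b + j_i l$, whereas across classes the cap of $c$ is indispensable to prevent the combination step from degrading to $\tilde{O}(b\,t^k n^k)$. Once this distinction is handled, the new optimum for $b$ accommodates the additional $c^k$ factor introduced into the combining term, and the remaining parts of the argument are routine modifications of those in Subsection~\ref{subsec:koiliaris}.
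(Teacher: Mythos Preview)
Your proposal is correct and follows essentially the same route as the paper: retain the cardinality coordinates through the cross-class combination (the paper's Algorithm~\ref{alg:koil2_mod} does exactly this), invoke Lemma~\ref{lem:disjointness} for correctness, and balance the $\tilde{O}(n^k(t/b)^k)$ within-class cost against the $\tilde{O}(b\,t^k c^k)$ combining cost by setting $b = \lfloor \sqrt[k+1]{n^k \log n / c^k} \rfloor$. Your explicit remark that the cardinality dimension is capped at $n_l$ inside each class but at $c$ across classes is a useful clarification that the paper leaves implicit.
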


\begin{proof}
The correctness of the algorithm can be shown by analogous arguments as those used in  Lemma~\ref{lem:disjointness}.
In particular, in this case the subset cardinalities computed in each congruence class are kept in order to verify the final cardinality of the solution subsets obtained from the addition of elements belonging in different congruence classes.

\medskip

\noindent
\emph{Complexity.}
The overall complexity of the algorithm, stemming from the computation of subset sums inside the congruence classes and the combination of those sums, is
\begin{align*}
    O \left( \frac{n^k \log n}{b^k} t^k \log t + b t^k c^k \log (t c) \right) &=
    O \left( \frac{n^k \log n}{b^k} t^k \log t + b t^k c^k \log t \right)\\
    &= O \left( t^k \log t \left( \frac{n^k \log n}{b^k} + b c^k \right) \right)\\
    &= \tilde{O} \left(\sqrt[k+1]{n^k c^{k^2}} t^k \right),
\end{align*}
which is obtained by setting $b = \sqrt[k+1]{\frac{n^k \log n}{c^k}}$.

\end{proof}

\begin{algorithm}[htb]
    \caption{$\texttt{CardDisjointSS}(Z,t)$}
    \label{alg:koil2_mod}
    \begin{algorithmic}[1]
        \Require A set $Z$ of $n$ positive integers, an upper bound integer $t$ and a size bound $c \geq 1$.
    	\Ensure The set $S \subseteq (\mathcal{S}_t (Z) \times [c])^k$ of all $k$-tuples of subset sums up to $t$ along with their respective cardinality on each dimension, occurring from disjoint subsets of $Z$ with cardinality up to $c$.

        \State $b \gets \big\lfloor \sqrt[k+1]{\frac{n^k \log n}{c^k}} \big\rfloor$
        \For{$l \in [b-1]$}
            \State $S_l \gets Z \cap \{ x \in \mathbb{N} \mid x \equiv l \Mod{b} \}$
            \State $Q_l \gets \{ \lfloor x/b \rfloor \mid x \in S_l \}$
            \State $\mathcal{R}(Q_l) \gets \texttt{DisjointSC}(Q_l, \lfloor t/b \rfloor)$
            \State $R_l \gets \{ (z_1 b + j_1 l, j_1), \ldots, (z_k b + j_k l, j_k) \mid (z_1, \ldots, z_k, j_1, \ldots, j_k) \in \mathcal{R} (Q_l) \}$
        \EndFor
        \Ret{ $R_0 \oplus_t \cdots \oplus_t R_{b-1}$}
    \end{algorithmic}
\end{algorithm}
\subsection{A randomised approach}
\label{subsec:bringmann_cardinality}
With slight modifications, it is possible to successfully adapt the algorithm of Subsection~\ref{subsec:bringmann} in order to consider the additional cardinality constraints of this section.

More specifically, there are two important remarks that lead to those modifications:
\begin{enumerate}
    \item It is necessary to verify the cardinalities of the involved subsets in the resulting polynomial.
    This can be achieved by introducing additional variables, similar to~\ref{subsec:koiliaris_cardinality}.

    \item Whereas in the algorithm of subsection~\ref{subsec:bringmann} there was a necessity of partitioning the input to layers in order to associate the cardinality of a subset with its target value, in this case such a procedure is redundant, since the upper bound on the cardinality of the sets is part of the input.

\end{enumerate}

\medskip
Following these remarks, we firstly modify the \texttt{ColorCoding} algorithm of the previous section, in order to additionally return cardinality information for the involved subsets.
This does not affect the correctness of the algorithm whatsoever and only incurs a blowup in the final complexity.

\begin{algorithm}[htb]
    \caption{$\texttt{CardColorCoding}(Z,t,c,\delta)$}
    \label{alg:ColorCoding_mod}
    \begin{algorithmic}[1]
        \Require A set $Z$ of positive integers, an upper bound $t$, a size bound $c \geq 1$ and an error probability $\delta >0$.
    	\Ensure A set $S \subseteq (\mathcal{S}_t (Z))^k \times [c]^k$ such that it contains any tuple $( \Sigma (Y_1), \ldots,  \Sigma (Y_k), \lvert Y_1 \rvert, \ldots, \lvert Y_k \rvert)$ with probability at least $1 - \delta$, where $Y_1, \ldots, Y_k \subseteq Z$ disjoint subsets of $Z$ with $\Sigma (Y_1), \ldots, \Sigma (Y_k) \leq t$ and $\lvert Y_1 \rvert, \ldots, \lvert Y_k \rvert \leq c$.

        \State $S \gets \emptyset$
        \State $\beta \gets 4^k / (4^k - 1)$
        \For{$j = 1, \ldots, \lceil \log_{\beta} (1 / \delta)\rceil$}
            \State randomly partition $Z = Z_1 \cup Z_2 \cup \cdots \cup Z_{c^2}$
            \For{$i = 1, \ldots, c^2$}
                \State $Z'_i \gets (Z_i \times \{ 0 \}^{k - 1} \times \{1\} \times \{0\}^{k-1}) \cup \ldots \cup (\{ 0 \}^{k - 1} \times Z_i \times \{0\}^{k-1} \times \{1\})$
            \EndFor
            \State $S_j \gets Z'_1 \oplus_t \cdots \oplus_t Z'_{c^2}$
            \State $S \gets S \cup S_j$
        \EndFor
        \Ret{$S$}
    \end{algorithmic}
\end{algorithm}

\noindent
\emph{Complexity.}
The algorithm performs $O(\log{(1/\delta)})$ repetitions.
To compute a pairwise sum of $k$ variables up to $t$ along with the respective cardinality of each dimension, $O(t^k c^k \log (t c)) = O(t^k c^k \log t)$ time is required.
In each repetition, $c^2$ pairwise sums are computed.
Hence, the total complexity of the algorithm is $O(c^{k + 2} \log (1 / \delta) t^k \log t)$.

\paragraph*{Algorithm description}
In this paragraph, we briefly describe the algorithm that successfully solves \kSS\ with cardinality constraints.
The proposed algorithm is called \texttt{CardColorCodingLayer} and is identical to Algorithm~\ref{alg:ColorCodingLayer} of Subsection~\ref{subsec:bringmann} with the following modifications:
\begin{enumerate}[label=(\alph*)]
    \item The calls to the \texttt{ColorCoding} algorithm on line~\ref{line:bringmann} actually consider the \texttt{CardColorCoding} algorithm previously described, hence the cardinality information is included.
    
    \item The FFT operations now involve polynomials of $2k$ variables, since an additional variable is considered for each subset, depicting its cardinality.
\end{enumerate}
\medskip

Thus, in contrast to Section~\ref{subsec:bringmann} where we had to run a \texttt{ColorCodingLayer} algorithm for each distinct layer, in this case, it suffices to run $\mathtt{CardColorCodingLayer}(Z, t, c, \delta)$.

\begin{theorem}
    Given a set $Z = \{ z_1, \ldots, z_n \}\subseteq \mathbb{N}$ of size $n$, targets $t_1,\dots ,t_k$ and cardinalities $c_1, \ldots, c_k$, as well as an error probability $\delta \in (0,1/2^{k+1}]$, Algorithm {\normalfont\texttt{CardColorCodingLayer}} can decide \kSS\ with cardinality constraints with probability at least $1 - \delta$ in time $\tilde{O} (n + t^k c^k)$, where $t = \max \{t_1,\dots,t_k \}$ and $c = \max \{ c_1, \ldots, c_k \}$.
\end{theorem}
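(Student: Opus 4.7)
The plan is to adapt the proof of Theorem~\ref{theorem:bringmann} (for the randomized \kSS\ algorithm) while leveraging two structural simplifications afforded by having the cardinality bounds as part of the input. First, since each solution subset $X^j$ has cardinality $c_j \leq c$, we need not partition $Z$ by element magnitude into layer instances as in Algorithm~\ref{alg:kss_bringmann}; a single invocation of \texttt{CardColorCodingLayer} suffices. Second, augmenting FFT outputs with $k$ extra exponents that record subset cardinalities (exactly as in \texttt{CardColorCoding}) lets the tuple $(c_1, \ldots, c_k)$ be read off together with $(t_1, \ldots, t_k)$ from the final polynomial.

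For correctness, fix disjoint $X^1, \ldots, X^k \subseteq Z$ with $\Sigma(X^j) = t_j$ and $|X^j| = c_j \leq c$. I set $\gamma = 6 \log(c/\delta)$ and $m$ equal to $c/\gamma$ rounded up to the next power of two, and partition $Z$ randomly into $m$ parts $Z_1, \ldots, Z_m$. Applying the concentration bound of~\cite[Claim~3.3]{Bringmann17} to each $X^j$ (which has at most $c$ elements) gives $|X^j \cap Z_i| \leq \gamma$ for every $i, j$ with probability at least $1 - k m \delta/c \geq 1 - \delta/2$, the last inequality using $\delta \leq 1/2^{k+1}$ (so $\log(c/\delta) \geq k+1$) and $m/c \leq 1/\gamma$. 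Conditioned on this event, each call to \texttt{CardColorCoding}$(Z_i, t, \gamma, \delta/(2m))$ returns the $2k$-tuple $(\Sigma(X^j \cap Z_i), |X^j \cap Z_i|)_{j=1}^{k}$ with probability at least $1 - \delta/(2m)$. A union bound over the $m$ buckets yields total failure probability at most $\delta$, and the subsequent binary-tree FFT combine step deterministically reconstructs $(t_1, \ldots, t_k, c_1, \ldots, c_k)$.

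For complexity, reading the input costs $\Theta(n)$. Each of the $m$ calls to \texttt{CardColorCoding} costs $O(\gamma^{k+2} \log(m/\delta)\, t^k \log t) = \tilde{O}(t^k)$ (substituting the bound established for \texttt{CardColorCoding} with size parameter $\gamma$ and sum parameter $t$), summing to $\tilde{O}(c\, t^k)$ across buckets. For the combine step, observe that at level $h$ of the binary tree, each intermediate polynomial has sum-exponents bounded by $t$ (any relevant partial tuple comes from intersecting some $X^j$ with a union of buckets, and the full $X^j$ already has sum at most $t$, so $\oplus_t$ discards everything larger) and cardinality-exponents bounded by $2^h \gamma$. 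Hence the $m/2^h$ FFT operations at level $h$ cost $O((m/2^h) \cdot t^k \cdot (2^h \gamma)^k \log(tc))$. Summing the resulting geometric series in $h$, the dominant term arises at $h = \log m$, giving $\tilde{O}((m\gamma)^k t^k) = \tilde{O}(c^k t^k)$. Combining all contributions yields total complexity $\tilde{O}(n + t^k c^k)$.

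The main obstacle will be justifying that the per-level sum bound remains $t$ throughout the combine tree despite the absence of any per-element magnitude bound of the kind used by \texttt{ColorCodingLayer} in Algorithm~\ref{alg:ColorCodingLayer}; this reduction depends crucially on applying the $t$-capped pairwise-sum operator $\oplus_t$ at every tree level so that only tuples corresponding to potentially valid solutions survive. A secondary delicate point is choosing $m$, $\gamma$, and the per-call error probability so that the union bound cleanly drives the total failure probability below $\delta$ while simultaneously letting the per-level FFT costs telescope to $\tilde{O}(c^k t^k)$.
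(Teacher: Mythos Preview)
Your argument is correct but departs from the paper's in how the inner calls and the combine tree are parametrized. The paper leaves \texttt{CardColorCodingLayer} literally as \texttt{ColorCodingLayer} with $l=c$: it passes the reduced sum cap $2\gamma t/c$ to each \texttt{CardColorCoding} call and uses the growing caps $\oplus_{2^h\cdot 2\gamma t/c}$ in the combine tree, while bounding the cardinality coordinates uniformly by $c$. You instead pass the full sum cap $t$ to each \texttt{CardColorCoding} call and apply $\oplus_t$ at every combine level, bounding the cardinality coordinates by the growing quantity $2^h\gamma$. Consequently your per-bucket phase costs $\tilde O(c\,t^k)$ in total versus the paper's $\tilde O(t^k/c^{k-1})$, but this is immaterial because in both analyses the combine step dominates at $\tilde O(t^k c^k)$, and the overall bound $\tilde O(n+t^k c^k)$ is the same. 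Your parametrization buys something real: it does not need any per-element magnitude bound of the form $z\le 2t/c$, which is what makes the cap $2\gamma t/c$ safe in the original $l$-layer setting; your remark about the ``main obstacle'' correctly identifies that $\oplus_t$ alone suffices for correctness here, since every relevant partial sum $\Sigma(X^j\cap\cdot)$ is at most $t$. The paper's one-line correctness claim (``not affected by the modifications'') leans on the $l$-layer reasoning and is less transparent on this point.
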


\begin{proof}
The correctness analysis is not affected by the aforementioned modifications, thus we only have to compute the complexity of the modified algorithm.
\medskip

\noindent
\emph{Complexity.}
In order to read the input set, $O(n)$ time is required.
The cost of the $m$ calls to \texttt{CardColorCoding} is
\begin{align*}
    O(m \gamma^{k + 2} \log (c / \delta) (\gamma t / c)^k \log (\gamma t / c)) &=
    O \left( \frac{\gamma^{2k + 2}}{c^{k - 1}} t^k \log t \right)\\
    &=
    O \left( \frac{\log^{2k + 2} (c / \delta)}{c^{k - 1}} t^k \log t \right).
\end{align*}
Combining the resulting sets costs
\begin{align*}
    O \left( \sum_{h=1}^{\log m} \frac{m}{2^h} (2^h \gamma t / c)^k c^k \log (2^h \gamma t c / c) \right) &=
    O \left(\sum_{h=1}^{\log m} \frac{2^{h (k - 1)}}{m^{k - 1}} t^k c^k \log t \right)\\
    &= O \left( \frac{t^k c^k \log t}{m^{k-1}} \sum_{h=1}^{\log m} \left( 2^{k - 1} \right)^h \right)\\
    &= O \left(\frac{t^k c^k \log t}{m^{k-1}} \left( 2^{k - 1} \right)^{\log m} \right)\\
    &= O(t^k c^k \log t),
\end{align*}
since for $c > 1$, it holds that $O \left(\sum_{k=0}^n c^k \right) = O(c^n)$.

\smallskip
\noindent
Thus, the total complexity is
\[
    O \left( n + t^k \log t \left(c^k + \frac{\log^{2k+2} (c / \delta)}{c^{k-1}} \right) \right) =
    \Tilde{O} (n + t^k c^k)
\]
\end{proof}

\section{Faster Algorithms for Multiple Subset Problems}
\label{sec:further}

The techniques developed in Sections~\ref{sec:kss} and~\ref{sec:kss_cardinality} can be further applied to give faster pseudopolynomial algorithms for the decision version of the problems \SSR, \kSSR\ and \MSS.
In this section we will present how these algorithms can be used to efficiently solve these problems.

The algorithms we previously presented result in a polynomial $P(x_1, \ldots, x_k)$ consisting of terms each of which corresponds to a $k$-tuple of realisable sums by disjoint subsets of the initial input set $Z$.
In other words, if there exists a term $x_1^{s_1} \ldots x_k^{s_k}$ in the resulting polynomial, then there exist disjoint subsets $Z_1, \ldots, Z_k \subseteq Z$ such that $\Sigma (Z_1) = s_1, \ldots, \Sigma (Z_k) = s_k$.
Hereinafter, when we refer to a solution of a \kSS\ input, we actually refer to this resulting polynomial, unless explicitly stated otherwise.

It is important to note that, while the deterministic algorithm of Subsection~\ref{subsec:koiliaris} returns a polynomial consisting of \emph{all} terms corresponding to such $k$-tuples of realisable sums by disjoint subsets, the randomised algorithm of Subsection~\ref{subsec:bringmann} does not.
However, that does not affect the correctness of the following algorithms, since it suffices to guarantee that the $k$-tuple corresponding to the optimal solution of the respective (optimisation) problem is included with high probability.
That indeed happens, since the resulting polynomial consists of \emph{any} viable term with high probability, as discussed previously.

\paragraph*{\SSR}
The first variation we will discuss is \SSR, which asks to determine, given a set $Z \subseteq \mathbb{N}$ of size $n$ and an upper bound $t$, what is the smallest ratio of sums between any two disjoint subsets $S_1,S_2 \subseteq Z$, where $\Sigma(S_1),\Sigma(S_2) \le t$.
This can be solved in deterministic $\tilde{O}(n^{2/3}t^2)$ time using the algorithm proposed in subsection \ref{subsec:koiliaris} by simply iterating over the terms of the final polynomial that involve both parameters $x_1$ and $x_2$ and checking the ratio of their exponents.
\SSR\ can also be solved with high probability in randomised $\tilde{O}(n + t^2)$ time using the algorithm proposed in Subsection \ref{subsec:bringmann} instead.

\paragraph*{\kSSR}
An additional extension is the \kSSR\ problem, which asks, given a set $Z \subseteq \mathbb{N}$ of size $n$ and $k$ bounds $t_1,\dots,t_k$, to determine what is the smallest ratio between the largest and smallest sum of any set of $k$ disjoint subsets $Z_1, \ldots, Z_k \subseteq Z$ such that $\Sigma (Z_i) \le t_i$.
Similar to \kSS, an interesting special case is when all $t_i$'s are equal, in which case we search for $k$ subsets that are as similar as possible in terms of sum.

Similarly, we can solve this in deterministic $\tilde{O}(n^{k/(k+1)} \cdot t^k)$ or randomised $\tilde{O} (n + t^k)$ time by using the corresponding algorithm to solve \kSS\ and subsequently iterating over the terms of the resulting polynomial that respect the corresponding bounds, and finally evaluating the ratio of the largest to smallest exponent.

\paragraph*{$k$\textsc{-way Number Partitioning}}
Another closely related problem called $k$\textsc{-way Number Partitioning} asks, given a set $Z \subseteq \mathbb{N}$ of size $n$, to partition its elements into $k$ subsets $Z_1, \ldots, Z_k$, whose sums $\Sigma (Z_i)$ are as similar as possible.
There are multiple different objective functions which may be used in order to define this concept of similarity.
For instance, one could ask for the minimum difference between the largest and smallest sums, the minimum ratio between the largest and smallest sums, the maximum possible smallest sum or alternatively the minimum possible largest sum.
Note that each of these objective functions may lead to a different solution~\cite{Korf10}.

In order to solve this problem for any of the previously mentioned objective functions, it suffices to solve \kSS\ for $t = \max(Z) + (\Sigma (Z) / k)$, where $\max(Z)$ denotes the largest element of $Z$, while only considering the terms $x_1^{s_1} \ldots x_k^{s_k}$ of the final polynomial for which $\sum s_i = \Sigma (Z)$.
This holds due to the fact that any candidate solution involves sets of subset sum at most equal to $t$, since if there exists a solution involving a subset $Z' \subseteq Z$ with $\Sigma (Z') > t$, that means that in the same solution, there exists a subset $Z'' \subseteq Z$ such that $\Sigma (Z'') < \Sigma(Z) / k$, on which case, removing any element from set $Z'$ and adding it to the set $Z''$ results in a potentially better solution, irrespective of the objective function used to define the concept of similarity.

\paragraph*{\MSS}
Finally, we consider the \MSS\ problem that asks, given a set $Z \subseteq \mathbb{N}$ of size $n$ and $k$ bounds $t_1,\dots,t_k$, to determine what is the maximum sum of sums of any set of $k$ disjoint subsets $Z_1, \ldots, Z_k$ of $Z$, such that $\Sigma (Z_i) \le t_i$.
This problem is a special case of the \textsc{Multiple Knapsack} problem and can also be seen as a generalisation of \kSS.
It should be clear that the same techniques as those e.g.\ used for \kSSR\ apply directly, leading to the same time complexity bounds of  $\tilde{O}(n^{k/(k+1)} \cdot t^k)$ deterministically and $\tilde{O} (n + t^k)$ probabilistically.

\paragraph*{Adding cardinality constraints}
Note that for all of the presented algorithmic problems, if specific cardinalities for the subsets involved in a solution are required, then we can obtain a solution by applying the algorithms of Section~\ref{sec:kss_cardinality} in an analogous manner.
In this case, the resulting polynomial is of $2k$ variables, $k$ of which are used to specify the cardinality of the subsets.
Thus, by checking those variables, it is possible to obtain a solution in this more restricted version of the problems.

\section{Future Work}

\paragraph*{Possible optimisations regarding FFT}
The algorithm of Subsection~\ref{subsec:koiliaris} as well as those presented in Section~\ref{sec:kss_cardinality} involve the computation of the possible sums by disjoint subsets along with their respective cardinality.
To do so, we extend the FFT operations to multiple variables, each representing either a possible subset sum or its cardinality.
Hence, for $k$ subsets and $n$ elements, we proceed with FFT operations on variables $x_1, \ldots, x_k, x_{k+1}, \ldots, x_{2k}$, where the exponents of $x_i, i \leq k$ are in $[t]$ for some given upper bound $t$, whereas the exponents of $x_i, i > k$ in $[n]$.
Notice however that each element is used only on a single subset, hence for a term $x^{s_1}_1 \ldots x^{s_k}_k x^{n_1}_{k+1} \ldots x^{n_k}_{2k}$ of the produced polynomial, it holds that $s_i \leq t$ and $\sum n_i \leq n$.
This differs substantially from our analysis, where we essentially only assume that $n_i \leq n$, which is significantly less strict.
Hence, a stricter complexity analysis may be possible on those FFT operations, resulting in a more efficient overall complexity for this algorithm.

Also notice that, in our proposed algorithms, each combination of valid sums appears $k!$ times.
This means that for every $k$ disjoint subsets $S_1, \ldots, S_k$ of the input set, there are $k!$ different terms in the resulting polynomial of the algorithm representing the combination of sums $\Sigma (S_1), \ldots, \Sigma (S_k)$.
This increase on the number of terms does not influence the asymptotic analysis of our algorithms, nevertheless can be restricted (e.g. by sorting and pruning) for better performance.
Additionally, one can limit the FFT operations to different bounds for each variable, resulting in slightly improved complexity analysis without changing the algorithms whatsoever.
In this paper, we preferred to analyse the complexity of the algorithms using $t = \max \{ t_1, \ldots, t_k \}$ for the sake of simplicity, but one can alternatively obtain time complexities of $\tilde{O} (n^{k / (k+1)} T)$ and $\tilde{O} (n + T)$ for the deterministic and the randomised algorithm respectively, where $T = \prod t_i$.
These can be further extended to the case where we additionally have cardinality constraints.

\paragraph*{Recovery of solution sets}
The algorithms introduced in this paper solve the \emph{decision version} of the \kSS\ problem.
In other words, their output is a binary response, indicating whether there exist $k$ disjoint subsets whose sums are equal to given values $t_1, \ldots, t_k$ respectively (and their corresponding cardinalities are equal to $c_1, \ldots, c_k$ in the cardinality constrained version of the problem).
An extension of these algorithms could involve the reconstruction of the $k$ solution subsets.
Koiliaris and Xu~\cite{KoiliarisX19} argue that one can reconstruct the solution set of \SubS\ with only polylogarithmic overhead.
That is possible by carefully extracting the \emph{witnesses} of each sum every time an FFT operation is happening.
These witnesses are actually the partial sums used to compute the new sum.
Thus, by reducing this problem to the \emph{reconstruction problem} as mentioned in~\cite{ALLT2011}, they conclude that it is possible to compute all the witnesses of an FFT operation without considerably increasing the complexity.
That is the case for one-dimensional FFT operations involving a single variable, so it may be possible to use analogous arguments for multiple variables.

\paragraph*{Extension of \SubS\ algorithm introduced by Jin and Wu}
Jin and Wu introduced an efficient $\tilde{O} (n + t)$ randomised algorithm for solving \SubS\ in~\cite{JinWu}.
This algorithm is much simpler than Bringmann's and actually has slightly better complexity.
It is interesting to research whether this algorithm can be extended to cope with \kSS\ (and the variations mentioned in Section~\ref{sec:further}), as was the case for Bringmann's, since that would result in a simpler alternative approach.



\bibliography{kSS}
\end{document}